\def\BibTeX{{\rm B\kern-.05em{\sc i\kern-.025em b}\kern-.08em
    T\kern-.1667em\lower.7ex\hbox{E}\kern-.125emX}}
\newtheorem{theorem}{Theorem}
\newtheorem{remark}{Remark}
\newtheorem{lem}{Lemma}
\newtheorem{definition}{Definition}
\newtheorem{assum}{Assumption}
\newenvironment{proof}{\textit{Proof: }}{\hfill$\blacksquare$}
\begin{document}

\title{A fair Peer-to-Peer Electricity Market model for Residential Prosumers\\
\thanks{This work was partially supported by NWO under research project P2P-TALES (grant n. 647.003.003) and the ERC under research project COSMOS, (802348).}
}

\author{\IEEEauthorblockN{ Aitazaz Ali Raja}
\IEEEauthorblockA{\textit{Delft Center for Systems and Control,}
\textit{TU Delft}\\
The Netherlands \\
\texttt{a.a.raja@tudelft.nl}}
\and
\IEEEauthorblockN{ Sergio Grammatico}
\IEEEauthorblockA{\textit{Delft Center for Systems and Control,} 
\textit{TU Delft}\\
The Netherlands \\
\texttt{s.grammatico@tudelft.nl}}
}

\IEEEoverridecommandlockouts
\IEEEpubid{\makebox[\columnwidth]{978-1-6654-4875-8/21/\$31.00~\copyright 2021 IEEE\hfill} \hspace{\columnsep}\makebox[\columnwidth]{ }}

\maketitle
\IEEEpubidadjcol

\begin{abstract}
In this paper, we propose a bilateral peer-to-peer (P2P) energy trading scheme for residential prosumers with a simplified entry to the market. We formulate the market as an assignment game, a special class of coalitional games. For solving the resulting decision problem, we design a bilateral negotiation mechanism that enables matched buyer-seller pairs to reach a consensus on a set of ``stable" and ``fair" trading contracts. The proposed negotiation process can be executed on possibly time-varying communication networks with virtually minimal information requirements that in turn preserves privacy among prosumers. Numerical simulations illustrate the beneficial features of our P2P market model and negotiation protocol. 
\end{abstract}

\begin{IEEEkeywords}
Peer-to-peer energy trading, electricity markets, renewable energy integration
\end{IEEEkeywords}

\section{Introduction}
Decarbonization of energy systems is one of the key agenda of the climate action plan and to achieve this goal, power systems are envisioning large scale integration of renewable energy sources (RES). Wide scale decentralized deployment of RES, especially photo-voltaic (PV), is being undertaken by the small prosumers (e.g. residential) which brings them at the center of this transformation \cite{cai2013impact}. Thus, many demand-side tools are being developed for the technical and economic integration of the residential prosumers.

Local or community based electricity markets can effectively facilitate the distributed deployment of RES by managing the associate uncertainty locally and by providing financial benefits. Therefore, such market based solutions have received considerable attention from smart-grid researchers, especially towards the peer-to-peer (P2P) market paradigm \cite{Tushar2020}.  P2P  markets  provide prosumers with  the direct control over the trade of their energy sources on their own terms of transactions to make profitable interactions. Thus,  it  encourages  wider  prosumer participation and also provides significant  benefits  to  the  system  operators  for  example  in terms of peak shaving \cite{moret2018energy}, and lower investments in grid capacity \cite{mengelkamp2018designing}.

However, the design of  the local P2P electricity  markets presents  mathematical and structural challenges. The whole sale market in EU requires sellers to enter with the complex offers, which requires high level of technical abilities. Such structure cannot be replicated in the markets where the participants are laypersons (residential prosumers). Thus, the key tasks are to design: a mechanism which seeks a market equilibrium while incorporating a self-interested  decision-making  attitude  by  the  participants and a structure simple enough to encourage the entry by residential prosumers. Therefore, in this paper, we first design a bilateral P2P market that simplifies the entry of a typical residential prosumer and then we present an algorithm that enables prosumers to converge to a \textit{fair} and \textit{stable} market solution, in context of coalitional game theory.

Coalitional game theory provides analytical tools to  study the cooperative interaction of selfish  and  rational  agents and thus, holds adequate prospects for the design of P2P markets. The authors in \cite{tushar2018peer} propose a P2P energy trading scheme in which prosumers form a coalition to trade energy among themselves, at a mid-market rate which ensures the stability of the coalition.  In \cite{han2018incentivizing} the authors use coalitional game theory to formulate a community based architecture for local energy exchange to minimize overall energy cost. A coalition formation game is formulated by the authors in \cite{tushar2019grid} for P2P energy exchange among prosumers, and the resulting coalition structure is shown to be stable. The price of exchange is determined by the double auction mechanism. Another coalition formation game is presented by the authors in \cite{tushar2020coalition}, that allows prosumers to optimize their battery usage for P2P energy trading. In \cite{khorasany2020new} the authors present an iterative procedure for peer matching among prosumers, who then undertake a bilateral negotiation to come to an agreement on the price and quantity of energy trade, but without coalitional game theoretic guarantees.
%The authors in \cite{luo2018distributed} propose bilateral negotiation among prosumers for energy exchange via coalition formation, but without considering coalitional-game-theoretic stability. 
In this paper, we model P2P energy trading as an assignment game, a special class of coalitional games, that allows for bilateral contracts and for a mutual settlement on the \textit{fair} and \textit{stable} contract prices via distributed negotiation.

\textit{Contribution}: We propose a bilateral P2P electricity market and a solution mechanism within the framework of coalitional game theory. Our key contributions are summarized next:
\vspace{-1mm}
\begin{itemize}
    \item We formulate bilateral P2P energy trading as an assignment game which simplifies the prosumer participation, amidst of RES uncertainty, and allows for product differentiation. Our formulation ensures the existence of a ``stable" set of bilateral contracts (Section \ref{sec: Peer-to-Peer market as coalitional game});
    \item We develop a distributed negotiation mechanism where the buyer  (seller) communicates only with the matched seller (buyer) and the market operator, and  we show that the mechanism converges to the $\tau$-value in the core of the assignment game while preserving the privacy among the market participants (Section \ref{sec: solution mechanism});
\end{itemize}
%The novelties in this paper are a simplified entry for sellers with uncertainty consideration and a mechanism to reach a \textit{fair} solution in the core via pair-wise bilateral negotiation.

 \textit{Notation and definitions}: Given a mapping $M: \mathbb{R}^n \rightarrow \mathbb{R}^n, \mathrm{fix}(M):= \{x \in \mathbb{R}^n \mid x = M(x)\} $ denotes the set of its fixed points. $\text{Id}$ denotes the identity operator. For a closed set \(C \subseteq \mathbb{R}^{n},\) the mapping $\mathrm{proj}_C$: \(\mathbb{R}^{n} \rightarrow C\) denotes the projection onto \(C,\) i.e., \(\operatorname{proj}_C(x)=\) \(\arg \min _{y \in C}\|y-x\| .\)  \(A \otimes B\) denotes the Kronecker product between the matrices \(A\) and \(B .\) {For $x_{1}, \ldots, x_{N} \in \mathbb{R}^{n},$ $\mathrm{col}(\left(x_{i}\right)_{i \in(1, \ldots, N)}):=\left[x_{1}^{\top}, \ldots, x_{N}^{\top}\right]^{\top}.$} $\mathrm{dist}(x,y):= \|y-x\|$. {For a closed set \(C \subseteq \mathbb{R}^{n}\) and $N \in \mathbb{N}, C^N:=\prod_{i=1}^{N} C_{i}$}. A continuous mapping $M : \mathbb{R}^n \rightarrow \mathbb{R}^n$ is a paracontraction, with respect to a norm $\|\cdot\|$ on $\mathbb{R}^n$, if $ \|M(x) - y\| < \|x - y\|,$ for all $(x,y) \in (\mathbb{R}^n \backslash \mathrm{fix}(M)) \times \mathrm{fix}(M)$.
\section{Mathematical background on assignment games and the $\tau$-value}
\subsection{Assignment games}
Assignment games are a class of coalitional games  that model a two sided matching market with the objective of finding optimal assignments between the opposite sides, for example buyers and sellers \cite{shapley1971assignment}. Let us denote the two sets of agents, i.e., buyers and sellers by $\mathcal{I}_\mathcal{B}$ and $\mathcal{I}_\mathcal{S}$, respectively. Here, each seller $j \in \mathcal{I}_\mathcal{S}$ offers a good at the price of at least $c_j$ and each buyer $i \in \mathcal{I}_\mathcal{B}$, bids $h_{i,j}$ for the good of seller $j$.  Then, the value function that gives value to each buyer-seller pair, with a slight abuse of notation, reads as:
\begin{equation}\label{eq: value function}
v(i,j) = \max\{0,h_{i,j} - c_j\}. 
\end{equation}
Any viable assignment  must satisfy $h_{i,j} > c_i$. Furthermore,  one-sided coalitions generate no value, i.e., $v(S) = 0$ if $S \subseteq \mathcal{I}_\mathcal{B}$ or $S \subseteq \mathcal{I}_\mathcal{S}$, thus only mixed coalitions are meaningful.

 Utilizing the fact that the buyer-seller pairs alone suffice to determine the two-sided market completely, we define an assignment matrix $M = [v(i,j)]$ for all pairs $(i,j) \in \mathcal{I}_\mathcal{B}\times \mathcal{I}_\mathcal{S}$.

\begin{definition} [Value function]\label{def: value function}
Let $\mathcal{I}_\mathcal{B} = \{1, \ldots, N_\mathcal{B}\}$ and $\mathcal{I}_\mathcal{S}= \{1, \ldots, N_\mathcal{S}\}$ be the sets of buyers and sellers, respectively. Let $M = [v(i,j)]_{(i,j) \in \mathcal{I}_\mathcal{B}\times \mathcal{I}_\mathcal{S}}$ be an assignment matrix with $v(i,j)$ as in (\ref{eq: value function}). Given $\mathcal{B} \subseteq \mathcal{I}_\mathcal{B}$ and $\mathcal{S} \subseteq \mathcal{I}_\mathcal{S}$, let $\mathcal{P}(\mathcal{B}, \mathcal{S})$ be the set of all possible matching configurations between $\mathcal{B}$ and $\mathcal{S}$, where a matching configuration is a set of two-sided matchings such that a seller (buyer) is matched with at most one buyer (seller). Then, the value function $v_{M}: \mathcal{I}_\mathcal{B} \cup \mathcal{I}_\mathcal{S} \to \mathbb{R}$ is defined as, $v_{M}(\mathcal{B} \cup \mathcal{S})=\underset{P \in \mathcal{P}(\mathcal{B}, \mathcal{S})}{\max} \sum_{(i,j) \in P} v(i,j).$ $\hfill \square$
\end{definition} 
Let us now formally define an assignment game.
 \begin{definition} [Assignment game]\label{def: assignment game}
Let $\mathcal{I}_\mathcal{B} = \{1, \ldots, N_\mathcal{B}\}$ and $\mathcal{I}_\mathcal{S}= \{1, \ldots, N_\mathcal{S}\}$ be the sets of buyers and sellers, respectively.  An assignment  game is a pair $\mathcal{M} = (\mathcal{I}_\mathcal{B} \cup \mathcal{I}_\mathcal{S}, v_M)$ such that the value function $v_M$ is as in Definition \ref{def: value function}. 
 $\hfill \square$
\end{definition} 
In an assignment game, the value generated by an optimal match between a buyer-seller pair $(i,j)$, $v(i,j)$, is distributed between both the members as a payoff.
\begin{definition}  [Payoff vector] \label{def: payoff}
Let $\mathcal{M} = (\mathcal{I}_\mathcal{B} \cup \mathcal{I}_\mathcal{S}, v_M)$ be an assignment game and $ (i,j) \in \mathcal{I}_\mathcal{B} \times \mathcal{I}_\mathcal{S}$ be a matched pair. Then, the element $(x'_i, x''_j)$, of vectors $(x', x'') \in \mathbb{R}^{N_\mathcal{B}} \times \mathbb{R}^{N_\mathcal{S}}$ represents the share of agent $i$ and $j$ of the value $v (i,j)$. $\hfill \square$
\end{definition}
The most widely studied solution concept of assignment games is the core, a set of \textit{efficient} and \textit{rational} payoff vectors. The core relates to the stability of  assignment, i.e., the satisfaction of the self-interested agents with the corresponding match.
\begin{definition} [Core of assignment game]
The core $\mathcal{C}_M$ of an assignment game $(\mathcal{I}_\mathcal{B} \cup \mathcal{I}_\mathcal{S}, v_M)$ is the following set:
%\begin{equation} \label{core assignment}
\begin{align}\label{core assignment}
 & \mathcal{C}_M :=  \big\{ (x', x'') \in \mathbb{R}^{N_\mathcal{B}} \times \mathbb{R}^{N_\mathcal{S}} \mid \sum_{i \in \mathcal{I}_\mathcal{B}} x_i' + \sum_{j \in \mathcal{I}_\mathcal{S}} x''_j = \nonumber\\
  & \qquad v_M(\mathcal{I}_\mathcal{B} \cup \mathcal{I}_\mathcal{S}), x_i' + x''_j \geq \hat{v}(i,j) \text{ for all } (i,j) \in \mathcal{I}_\mathcal{B} \times \mathcal{I}_\mathcal{S}  \big\}. 
\end{align}
%\end{equation} 

$\hfill \square$
\end{definition}

    \begin{remark}[Non-emptiness of core {\cite{shapley1971assignment}}]\label{lem: non empty core}
An assignment game (as in Definition \ref{def: assignment game}) has a non-empty core. $\hfill \square$
\end{remark}

 The payoff $(x'_i, x''_j)$ as a result of a bilateral trade between an optimally matched pair $(i,j) \in \mathcal{I}_\mathcal{B} \times \mathcal{I}_\mathcal{S}$ determines the bilateral contract price $\lambda_{i,j}$. The contract price is defied as the difference of the bid of buyer $i$ and his payoff, i.e.,  $\lambda_{i,j} = h_{i,j} - x'_i$.
We remark that, the  core  set  is  not  singleton and different core payoffs can favor different sides of the market \cite{shapley1971assignment}. The two extreme points of the core are referred as the buyer optimal payoff $(\overline{x}', \underline{x}'')$ and the seller optimal payoff $(\underline{x}', \overline{x}'')$. Therefore, it is important to identify a \textit{fair} payoff that belongs to the core. Next, we provide one such \textit{fair} payoff, namely the $\tau$-value in context of assignment games \cite{Nunez2002}.
\subsection{The $\tau$-value for assignment games}
The $\tau$-value is generally defined as an average of the utopic payoff and the minimal rights payoff where, the utopia payoff is regarded as the maximal payoff an agent can receive in the core and the minimal rights payoff is what an agent can guarantee himself. In context of an assignment game $(\mathcal{I}_\mathcal{B} \cup \mathcal{I}_\mathcal{S}, v_M)$, the utopia payoff for a buyer $i$ is given by his marginal contribution to the grand coalition, which is also the buyer optimal, i.e., $ \overline{x}'_i = v_M (\mathcal{I}_\mathcal{B} \cup \mathcal{I}_\mathcal{S}) - v_M (\mathcal{I}_\mathcal{B} \cup \mathcal{I}_{\mathcal{S}\backslash \{i\}})$.  Furthermore, the minimal rights payoff of buyer $i$ that is optimally matched with seller $j$ is given as $\underline{x}'_i = v_M (\mathcal{I}_\mathcal{B} \cup \mathcal{I}_{\mathcal{S}\backslash \{j\}}) - v_M (\mathcal{I}_{\mathcal{B}\backslash \{i\}} \cup \mathcal{I}_{\mathcal{S}\backslash \{j\}})$. Finally, the $\tau$-value of an assignment game is as follows:
\begin{equation}\label{eq: tau value}
 \tau (v_M) = \frac{(\overline{x}', \underline{x}'') + (\underline{x}', \overline{x}'')}{2}   
\end{equation}
\begin{remark}[$\tau$-value in core {\cite{Nunez2002}}]\label{rem: tau value in core}
The $\tau$-value of an assignment game (as in Definition \ref{def: assignment game}) belongs to the core. $\hfill \square$
\end{remark}
Using the fact that the $\tau$-value is a midpoint between the buyer-optimal and the seller-optimal core payoff, we can define a set of favorable payoffs for each side. 
\begin{definition} [Favorable payoff]
For buyer $i$ in an optimally matched pair $(i,j)$, the set of favorable payoffs is
\begin{equation} \label{eq: fav payoff}
 \mathcal{X}_i := \{x'_i \in \mathbb{R} \mid x'_i \geq \frac{(\overline{x}_i' + \underline{x}_i')}{2}, x'_i + x''_j = v(i,j) \}. 
\end{equation} 
$\hfill \square$
\end{definition}

In the sequel, we associate the idea of fairness to the $\tau$-value and use it as a solution concept for our bilateral P2P market design. In practice, bilateral agreements should be directly negotiated by self-interested agents. Thus, we propose a distributed solution mechanism in which the agents negotiate bilaterally to autonomously reach a consensus on the payoff. 
\begin{definition} [Consensus set]\label{def: consensus set}
The consensus set $\mathcal{A} \subset \mathbb{R}^{N^{2}}$ is defined as:
\begin{equation}\label{eq: consensus}
\mathcal{A} := \{\mathrm{col}(\boldsymbol{x}_1, \ldots, \boldsymbol{x}_N) \in \mathbb{R}^{N^{2}} \mid \boldsymbol{x}_i = \boldsymbol{x}_j, \forall i,j \in \mathcal{I}\}. 
\end{equation} $\hfill \square$
\end{definition}

 %\begin{figure}[t]
%\centering
%\includegraphics[width =0.8\linewidth]{market structure_1.png}
%\caption{Illustrative scheme of bilateral P2P energy trading.}
%\label{fig: market structure}
%\vspace{-3.5mm}
%\end{figure}
\vspace{-3mm}
\section{P2P market as an assignment game}\label{sec: Peer-to-Peer market as coalitional game}

 In this section, we formulate a bilateral P2P energy market as assignment game where the participants are partitioned into buyers and sellers.  A prosumer who owns an energy source is regarded as a seller while a buyer can be a mere consumer as well. For modelling a seller, we consider a typical residential prosumer who is not present at home during high PV generation hours on the weekdays. Therefore, it makes high economic sense for such prosumer to sell the energy produced in the market. Let us elaborate on the models of the market participants namely, sellers, buyers and the market operator. First, a sellers' offer is composed of a rated power of the generation source and the price per KWh of energy for the period of availability. Such offer structure, with easily known parameters, greatly simplifies the process of prosumer's entry into the market which in fact is an important practical requirement for enabling the participation of residential prosumers (layman) in P2P markets. Another way of looking at our model is that a seller offers to rent out his generation source for the desired time period. Our market design also creates an opportunity for the data markets in energy systems that allows participants to share their generation and demand data for additional financial or operational benefits. This data is then utilized by the market operator to optimize amidst of uncertainty. We note that our model can accommodate other energy sources as well (e.g. ES) but we maintain our focus on PV as it is most widely adopted RES at residential level.
 
A buyer enters the market with the energy demand,  bid per KWh and the preference factor that allows a buyer to prioritise sellers based on the desired criteria (e.g. green energy, seller rating). Finally, the market is operated by a central  operator who has complete information of bids and offers, and is also responsible for maximizing the market welfare. 

%In Figure \ref{fig: market structure}, we illustrate a high level concept of the proposed bilateral P2P energy market structure.\\ 
Let $c_j$ denote the price demanded by a seller $j \in \mathcal{I}_\mathcal{S}$ per KWh of energy and let $s_j$ represent the rated power of the offered energy source; then, an offer of a seller $j$ is given by a pair $(c_j, s_j)$. Similarly, let us denote the energy demand of a buyer $i \in \mathcal{I}_\mathcal{B}$ by $d_i$ and his preference factor for seller $j$ by  $\alpha_{i,j}$. Furthermore, let $p_i$ be a base price that the buyer $i$ is willing to pay for each unit (1 KWh) of energy, hence he presents its bid as $(\alpha_{i,j}p_i, d_i)$. Next, we impose some practical limitations on the buyer bids and seller offers to make our market setup economically rational for the participants.
 Let $g_\text{b}$ and $g_\text{s}$ denote the buying price and the selling price of energy provided by the grid, respectively. Then, the rational buyer $i$ should offer a seller $j$ a higher energy price than that of the grid, but not more than the grid's selling price, i.e.,
$ \alpha_{i,j}p_i \in (g_\text{b}, g_\text{s}] $ and analogously, the seller $j$ satisfies $c_j \in [g_\text{b}, g_\text{s}). $ 

The energy generation by RES (PV) is inherently uncertain thus to encourage prosumer participation we transfer the responsibility of accounting for this uncertainty from a seller to the market operator. We achieve this by allowing seller to only include the rated power of his energy asset instead of the energy.  Therefore, a stochastic market mechanism is required. For this purpose we use scenario modeling of uncertainty in RES generation. Let the set of generation scenarios of future be $\mathcal{F}$ and denote the probability of occurrence of the scenario $f \in \mathcal{F}$ by $\rho_f$. Also, let us denote the generation forecast of seller $j$'s energy source  in scenario $f$ by $\hat{s}_j(f)$ then the corresponding expected value is given by $\mathbb{E}[\hat{s}_j] = \sum_{f \in \mathcal{F}} \rho_f \hat{s}_j(f) $. We note that, without the loss of generality, the uncertainty can also be considered in demand. However, in this paper we assume the demand forecast to be deterministic. Next, we formulate a P2P energy market as an assignment game.

In our P2P market setup, the sellers and buyers make bilateral contracts that generate certain utility (value) for both. Let buyer $i \in \mathcal{I}_\mathcal{B}$ and seller $j \in \mathcal{I}_\mathcal{S}$ make a bilateral contract then, the value $\hat{v}(i,j)$ generated by this contract reads as
%\vspace{-5mm}
\begin{equation}\label{eq: contract}
    \hat{v}(i,j) =
    \begin{cases}
    \begin{array}{ll}
    &(\max \{0, \alpha_{i,j}p_i - c_j\}) d_i \qquad \quad \; \text{if } \mathbb{E}[\hat{s}_j] \geq d_i \\

    &(\max \{0, \alpha_{i,j}p_i - c_j\}) \mathbb{E}[\hat{s}_j] \qquad  \text{otherwise.} \\
    \end{array} 
    \end{cases}
\end{equation}
Let us elaborate on the bilateral contract value in (\ref{eq: contract}). First, the contract is only viable when buyer's bid of the energy is higher than seller's offer, i.e., $\alpha_{i,j}p_i > c_j$. Then, in the first case, i.e., $\mathbb{E}[\hat{s}_j] \geq d_i$, trading each unit generates the welfare equal to $\alpha_{i,j}p_i - c_j$ where, the total traded units are $d_i$. Furthermore, the excess energy of the seller $(s_j - d_i)$ is sold to the grid. The second case has a similar explanation. We note that, the value of a non-viable contract will be zero and that if $s_j = d_i$ then the two cases are equivalent. 

Now, let us define an assignment matrix $M = [\hat{v}(i,j)]_{(i,j) \in \mathcal{I}_\mathcal{B}\times \mathcal{I}_\mathcal{S}}$ where each element $\hat{v}(i,j)$ represents the value of a bilateral contract between buyer $i \in \mathcal{I}_\mathcal{B}$ and seller $j \in \mathcal{I}_\mathcal{S}$. Then, the corresponding assignment game is given by $\mathcal{M}=\left(\mathcal{I}_{B} \cup \mathcal{I}_{S}, v_{M}\right)$. To solve the resulting game, the market operator first evaluates the value $v_M(S)$, for each $S \subseteq \mathcal{I}$, by solving the following assignment problem: 
\begin{equation}\label{eq: assignment game}
\mathbb{P}(S):
\left\{ \quad
\begin{aligned}
 &\displaystyle \underset{\mu}{\max}  \sum_{i \in \mathcal{I}_{\mathcal{B}}\cap S} \sum_{j \in \mathcal{I}_{\mathcal{S} }\cap S} \hat{v}(i,j) \mu_{i, j} \\
    & \displaystyle \:\: \mathrm{s.t.}  \sum_{i \in \mathcal{I}_{\mathcal{B}}\cap S}  \mu_{i, j} \leq 1 \qquad \quad \forall j \in \mathcal{I}_{\mathcal{S} }\cap S \\
    & \displaystyle \quad \;\;\:  \sum_{j \in \mathcal{I}_{\mathcal{S}}\cap S}  \mu_{i, j} \leq 1 \qquad \quad \forall i \in \mathcal{I}_{\mathcal{B} }\cap S 
\end{aligned}
\right.
\end{equation}
with matching factors $\mu_{i, j} \in \{0, 1\}$, where $\mu_{i, j} = 1$ represents the matching between buyer $i$ and seller $j$. The constraints imposed on the matching factors ensure one-to-one matching. We note that the sellers and buyers can also enter as multiple agents to ensure adequate energy trading in the case of participation discrepancy on two sides of the market. The results obtained by solving the assignment problem in (\ref{eq: assignment game}) enable the market operator to evaluate the optimal buyer-seller assignment and the marginal contribution of the agents. Following are the notable features of our P2P market design:  
\begin{itemize}
    \item Existence: There always exist a set of unobjectionable bilateral contracts for all participants, i.e., the core of a bilateral P2P market is always non-empty (Remark \ref{lem: non empty core}).
    \item Product differentiation: Buyers can assign priority to seller characteristics (e.g. green energy, location of seller) via preference factors $\alpha_{i,j}$.
    \item Convenience: Residential prosumers do not require any technical tools or methods to offer suitable amount of energy for given time interval (offer includes power rating) thereby simplifying the market entry.
   \item Bilateral negotiation: Optimal bilateral contracts are assigned centrally by the market operator but the contract price is negotiated internally between matched buyer and seller, thus preserving the inter-prosumer privacy.
\end{itemize}
\section{Solution mechanism}\label{sec: solution mechanism}
After the market operator's evaluation of the optimal assignment, members of each matched pair negotiate between themselves for a bilateral agreement on the trading price. The goal of our mechanism design is to enable the matched pairs to independently reach a consensus on a fair bilateral contract price such that the corresponding collective vector of payoffs belong to the core in (\ref{core assignment}). Thus, to achieve our goal, we propose a distributed bilateral negotiation mechanism, where a central market operator with complete information of the game transmits to the market participants only their marginal contribution to the grand coalition (P2P market). After receiving the required information, each agent distributedly proposes a payoff allocation to his matched agent. We prove that utilizing such a limited information, the proposed solution converges to a fair and stable payoff distribution, i.e, to the $\tau$-value.
\vspace{-1mm}
\subsection{distributed bilateral negotiation}\label{subsec: Solution mechanism}
We consider a bilateral negotiation process in which, at each negotiation step $k$, a buyer (seller) communicates with the matched seller (buyer) to bargain for his payoff.  We present the process of negotiation for a matched pair $\mathcal{E}_{i,j} \in \mathcal{E} := \{(i,j) \mid \mu_{i,j} = 1, \text{ for all } (i,j) \in \mathcal{I}_\mathcal{B}\times \mathcal{I}_\mathcal{S}\}$ where, $\mathcal{E}$ is a set of matchings in an optimal assignment attained by solving (\ref{eq: assignment game}) for the grand coalition $(\mathcal{I}_{B} \cup \mathcal{I}_{S})$.
The matched pair communicates over a directed network link 
%bipartite graph $G^{k}=(\mathcal{I}_\mathcal{B} \times \mathcal{I}_\mathcal{S}, \mathcal{E}^{k}) $, where
%where for $i \in \mathcal{I}_\mathcal{B}$ and $j \in  \mathcal{I}_\mathcal{S}$,  $(i, j) \in \mathcal{E}^{k}$ means that there is an active link between the buyer $i$ and seller $j$ at iteration $k$ and they are then referred as neighbours. 
%We assume that at each iteration $k$ an agent $i$ observes only the proposals of its neighbouring agents. Furthermore, we assume that each buyer-seller pair communicates at least once during a time period of length $Q$, which ensures that the agents communicate sufficiently often with each other. In other words, we assume that the union of the communication graphs over a time period of length $Q$ is connected \cite[Assumption 3.2]{nedic2017achieving}. 
%\begin{assum}[$Q-$connected graph]\label{asm: Q-con }
%For all $k \in \mathbb{N}$, the union graph $(\mathcal{I}_\mathcal{B} \times \mathcal{I}_\mathcal{S}, \cup_{l=1}^{Q} \mathcal{E}^{l+k})$  is strongly connected for some integer $Q \geq 1$.  $\hfill \square$
%\end{assum}
%the edges in the communication graph $G^{k}$ are 
weighted using an adjacency matrix $W^{k} = [w_{i,j}^k]$, whose element $w_{i,j}^k$ represents the weight assigned by agent $i$ to the payoff proposal of his matched agent $j$, ${ \boldsymbol{x}}_j^{k}$. Here, the time-variation $k$ refers to the variation of the weights assigned by each agent to the proposal of the corresponding matched agent.
%for some $j$, $w_{i,j}^k = 0$ implies that the agent $i$ does not negotiate with agent $j$ at iteration $k$, i.e., $(i, j) \notin \mathcal{E}^{k} $. We note that in a P2P market the buyers (sellers) do not negotiate among themselves  hence, $w_{i,j}^k = 0, \text{ for all } (i, j) \in \mathcal{I}_\mathcal{B}$    $(\mathcal{I}_\mathcal{S}) $.  
Furthermore, we assume the adjacency matrix to be stochastic with the positive entries, which means that an agent always gives some weight to his previous proposal and the proposal of the matched agent.
\begin{assum}[Stochastic adjacency matrix]\label{asm: graph}
 For all $k \geq 0$, the adjacency matrix $W^{k} = [w_{i,j}^k]$ of the communication links is row-stochastic and $\exists$ $\gamma > 0$ such that $w_{i,j}^k \geq \gamma$. $\hfill \square$
\end{assum}
We further make a technical assumption on the elements of the adjacency matrix $W^{k}$, i.e., they belong to a finite set hence, finitely many adjacency matrices are available.
\begin{assum}[Finitely many adjacency matrices]\label{asm: fixed graph}
The adjacency matrices $\{W^k\}_{k \in \mathbb{N}}$ of the communication graphs belong to $\mathcal{W}$, a finite family of matrices that satisfy Assumption \ref{asm: graph}, i.e., $W^k \in \mathcal{W}$ for all $k \in \mathbb{N}$.  $\hfill \square$
\end{assum}

%We remark that the set of adjacency matrices can be arbitrarily large hence Assumption \ref{asm: fixed graph} does not pose any practical limitation on our bilateral negotiation mechanism. 
At each negotiation step $k$, an agent $i$ bargains by proposing a payoff distribution $\boldsymbol{x}_i^k \in \mathbb{R}^2$, for both of the agents in a matched pair $(i,j)$. To evaluate a proposal, he first takes an average of the estimate of the matched agent $\boldsymbol{x}_j^k$ and his own proposal weighted by an adjacency matrix $W^k$, $\sum_{j \in \mathcal{E}_{i,j}} w_{i,j}^k \boldsymbol{x}_{j}^{k}$.  Next, agent $i$ receives the information of his marginal contribution in the market by the market operator, which allows agent $i$ to evaluate the set of his favorable payoffs $\mathcal{X}_i$, as in (\ref{eq: fav payoff}). We note that our algorithm does not require evaluation of the complete core as for the algorithms presented in \cite{bauso2015distributed} and \cite{raja2020approachability}.
After receiving the required information, agent $i$ projects the average $\hat{\boldsymbol{x}}_i^{k}:= \sum_{j \in \mathcal{E}_{i,j}} w_{i,j}^k \boldsymbol{x}_{j}^{k} $ on the set of favorable payoffs $\mathcal{X}_i$. Thus, the iteration reads as
$\boldsymbol{x}_i^{k+1}= \mathrm{proj}_{\mathcal{X}_i}\textstyle (\hat{\boldsymbol{x}}_i^{k}). $
We can generalize this iteration by replacing the projection operator, $\mathrm{proj}(\cdot)$, with a special class of operators namely, paracontractions. This generalization enables us to utilize operator theory for showing the convergence of our algorithm later. The protocol we propose for an agent $i \in \mathcal{E}_{i,j}$ is $\boldsymbol{x}_i^{k+1}= T_i\textstyle (\hat{\boldsymbol{x}}_i^{k}), $
that in collective form, for negotiation between pair $(i,j)$, reads as the fixed-point iteration
 \begin{equation}\label{main_it}
     \boldsymbol{x}_{(i,j)}^{k+1} = \boldsymbol{T} (\boldsymbol{W}^k \boldsymbol{x}_{(i,j)}^{k}),
\end{equation}
 where  $\boldsymbol{T} (\boldsymbol{x}):= \mathrm{col}(T_1(\boldsymbol{x}_1),  T_2(\boldsymbol{x}_2))  $ and $\boldsymbol{W}^k := W^{k} \otimes I_4 $ represents an adjacency matrix. In (\ref{main_it}), we require the operator $T_i$ to have $\mathcal{X}_i$ in (\ref{eq: fav payoff}) as fixed-point set, i.e., $ \mathrm{fix}(T_i) = \mathcal{X}_i $.
 \begin{assum}[Paracontractions]\label{asm: fixed points of T}
 The operator $\boldsymbol{T}$ in (\ref{main_it}) is such that $T_i \in \mathcal{T}$, where $\mathcal{T}$ is a finite family of paracontraction operators such that $ \mathrm{fix}(T_i) = \mathcal{X}_i$ with $\mathcal{X}_i$ in (\ref{eq: fav payoff}). $\hfill \square$
 \end{assum}
The iteration in (\ref{main_it}) provides the bilateral negotiation process that will be executed by each matched pair $(i,j) \in \mathcal{I}_{B} \times \mathcal{I}_{S}$ independently. Next, we provide our main convergence result.

%\begin{theorem}[Convergence of bilateral negotiation]\label{theorem: main}
%Let Assumptions \ref{asm: graph}$-$\ref{asm: fixed points of T} hold. Let $\mathcal{X}_{(i,j)}:=\mathcal{X}_i \cap \mathcal{X}_j $ with $\mathcal{X}_i$ as in (\ref{eq: fav payoff}). Then, starting from any $\boldsymbol{x}^0_{(i,j)}$, the sequence \((\boldsymbol{x}^{k}_{(i,j)})_{k=0}^{\infty}\) generated by the iteration in (\ref{main_it}) converges to $\boldsymbol{x}^*_{(i,j)} \in \mathcal{X}_{(i,j)}^2 \cap \mathcal{A}$ with $\mathcal{A}$ as in (\ref{eq: consensus}). The collective payoff vector  $\boldsymbol{x}^* = \mathrm{col}(\boldsymbol{x}^*_{(i,j)})_{(i,j) \in \mathcal{E}} \in \mathcal{C}_M^N$ with $\mathcal{C}_M$ being the core in (\ref{core assignment}) and $\boldsymbol{x}^*$ is the $\tau$-value, $\tau(v_M)$ in (\ref{eq: tau value}). $\hfill \square$
%\end{theorem}

\begin{theorem}[Convergence of bilateral negotiation]\label{theorem: main}
Let Assumptions \ref{asm: graph} $-$ \ref{asm: fixed points of T} hold. Let $\mathcal{X}_{(i,j)}:=\mathcal{X}_i \cap \mathcal{X}_j $ with $\mathcal{X}_i$ as in (\ref{eq: fav payoff}). Then, starting from any $\boldsymbol{x}^0_{(i,j)}$, the sequence \((\boldsymbol{x}^{k}_{(i,j)})_{k=0}^{\infty}\) generated by the iteration in (\ref{main_it}) converges to $\boldsymbol{x}^*_{(i,j)} \in \mathcal{X}^2_{(i,j)} \cap \mathcal{A}$ with $\mathcal{A}$ as in (\ref{eq: consensus}) and  the collective payoff vector  $\boldsymbol{x}^* \in \prod_{{(i,j) \in \mathcal{E}}} \mathcal{X}_{(i,j)}$ is the $\tau$-value in (\ref{eq: tau value}) thus belongs to the core, $\mathcal{C}_M$ in (\ref{core assignment}). $\hfill \square$
\end{theorem}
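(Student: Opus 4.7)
The plan is to reduce the theorem to a convergence result for products of paracontractions drawn from a finite family with a common fixed-point set, and then to verify that this common fixed-point set is the singleton given by the $\tau$-value. The proof splits naturally into three stages: (i) characterize the fixed-point set of $\boldsymbol{T}\circ\boldsymbol{W}^k$ and show it is independent of $k$; (ii) establish convergence of the switching iteration to this set; (iii) identify the limit with the $\tau$-value.

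First I would observe that, since each $\boldsymbol{W}^k$ is row-stochastic with entries uniformly bounded below by $\gamma>0$ (Assumption~\ref{asm: graph}), $\mathrm{fix}(\boldsymbol{W}^k)$ equals the consensus set $\mathcal{A}$ for every $k$. Together with $\mathrm{fix}(\boldsymbol{T})=\mathcal{X}_i\times\mathcal{X}_j$ (Assumption~\ref{asm: fixed points of T}), this gives $\mathrm{fix}(\boldsymbol{T}\circ\boldsymbol{W}^k)=(\mathcal{X}_i\times\mathcal{X}_j)\cap\mathcal{A}=\mathcal{X}^2_{(i,j)}\cap\mathcal{A}$, independently of $k$. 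A quick calculation then shows this set is a singleton carrying precisely the $\tau$-value. Indeed, any consensus vector in $\mathcal{X}_i\cap\mathcal{X}_j$ satisfies $x'_i\ge(\overline{x}'_i+\underline{x}'_i)/2$, $x''_j\ge(\overline{x}''_j+\underline{x}''_j)/2$, and $x'_i+x''_j=v(i,j)$; using $v(i,j)=\overline{x}'_i+\underline{x}''_j=\underline{x}'_i+\overline{x}''_j$, adding the two inequalities and comparing with the budget equation forces both to hold with equality, so $x'_i=(\overline{x}'_i+\underline{x}'_i)/2$ and $x''_j=(\overline{x}''_j+\underline{x}''_j)/2$, which is exactly the pair-$(i,j)$ component of \eqref{eq: tau value}.

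Next I would argue that $\boldsymbol{W}^k$ is a paracontraction with respect to a suitable weighted 2-norm (a standard property of scrambling stochastic matrices with positive lower bound $\gamma$), and that the composition $\boldsymbol{T}\circ\boldsymbol{W}^k$ is also a paracontraction on $\mathbb{R}^4$ with fixed-point set $\mathcal{X}^2_{(i,j)}\cap\mathcal{A}$: for any $\boldsymbol{x}\notin\mathcal{A}$, $\boldsymbol{W}^k$ contracts strictly toward the common fixed-point set; for any $\boldsymbol{x}\in\mathcal{A}\setminus(\mathcal{X}_i\times\mathcal{X}_j)$, the paracontraction $\boldsymbol{T}$ does. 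Assumption~\ref{asm: fixed graph} and the finiteness in Assumption~\ref{asm: fixed points of T} ensure that the sequence of compositions $\{\boldsymbol{T}\circ\boldsymbol{W}^k\}_{k\in\mathbb{N}}$ is drawn from a finite family of paracontractions sharing the common fixed-point set $\mathcal{X}^2_{(i,j)}\cap\mathcal{A}$. The classical Elsner--Koltracht--Neumann theorem on products of paracontractions then yields convergence of $(\boldsymbol{x}^k_{(i,j)})$ to some $\boldsymbol{x}^*_{(i,j)}\in\mathcal{X}^2_{(i,j)}\cap\mathcal{A}$. By the singleton characterization of the previous step, $\boldsymbol{x}^*_{(i,j)}$ has both components equal to the pair-$(i,j)$ entry of the $\tau$-value.

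The main obstacle I expect is justifying that the composition $\boldsymbol{T}\circ\boldsymbol{W}^k$ inherits the paracontraction property with the claimed common fixed-point set under switching: composition of paracontractions is not, in general, a paracontraction unless one controls how fixed-point sets interact. The clean way around this is to exploit the product structure (consensus step followed by componentwise projection-type step) so that strict decrease of the distance to $\mathcal{X}^2_{(i,j)}\cap\mathcal{A}$ at every iteration not already at that set can be verified directly, and then to use Assumptions~\ref{asm: fixed graph}--\ref{asm: fixed points of T} to upgrade this to uniform convergence across the finitely many operators in the family. Finally, running the argument in parallel over all matched pairs $(i,j)\in\mathcal{E}$ yields the collective payoff $\boldsymbol{x}^*$ equal to the $\tau$-value, which lies in $\mathcal{C}_M$ by Remark~\ref{rem: tau value in core}.
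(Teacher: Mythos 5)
Your argument is correct and follows the same skeleton as the paper's proof: (i) identify $\mathcal{X}_i\cap\mathcal{X}_j$ under consensus as the singleton carrying the pair-$(i,j)$ component of the $\tau$-value, (ii) invoke a convergence result for the switched iteration $\boldsymbol{x}^{k+1}=\boldsymbol{T}(\boldsymbol{W}^k\boldsymbol{x}^k)$, (iii) conclude membership in the core via Remark~\ref{rem: tau value in core}. The two places where you diverge are worth noting. First, you actually prove the singleton claim (adding the two half-sum inequalities and using $\overline{x}'_i+\underline{x}''_j=\underline{x}'_i+\overline{x}''_j=v(i,j)$ to force equality), whereas the paper simply asserts $\mathcal{X}_{(i,j)}=\tfrac{1}{2}\bigl((\overline{x}_i',\underline{x}_j'')+(\underline{x}_i',\overline{x}_j'')\bigr)$; your computation is the missing justification and is welcome. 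Second, for the convergence step the paper cites Lemma~\ref{lemma: paracontractions} (Fullmer--Morse) as a black box, while you attempt to re-derive it via the Elsner--Koltracht--Neumann composition argument. That is a legitimate alternative route, but it is exactly where your sketch is thinnest: the composition theorem requires $\boldsymbol{T}$ and $\boldsymbol{W}^k$ to be paracontractions with respect to the \emph{same} norm, and a merely row-stochastic $W^k$ with positive entries is paracontractive with respect to a weighted norm depending on its left Perron vector, which varies with $k$ under Assumption~\ref{asm: fixed graph}; reconciling these norms (or replacing the norm argument by a direct strict-decrease estimate on the distance to $\mathcal{X}^2_{(i,j)}\cap\mathcal{A}$, as you propose) is precisely the technical content of the cited lemma, so in a final write-up you should either carry that estimate out or simply cite Lemma~\ref{lemma: paracontractions} as the paper does. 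With that step filled in, your proof is complete and, if anything, more self-contained than the published one.
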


\section{Numerical simulations}
In this section, we demonstrate the effectiveness of proposed algorithm by conducting numerical simulations of our bilateral P2P market design for time slots with high PV generation. We consider 3 residential prosumers with energy deficiency and  3  with  surplus  to  act  as  buyers  and  sellers,  respectively. Sellers are equipped with either a PV source of capacity 2 - 5 kWp or an ES of 4 kWh.  The buyers assign preference level to each  seller using the priority factor $\alpha_{i,j} \in [1,1.5]$ with $1$ being indifference to any criteria, e.g. green source (PV) vs brown source (ES), etc. Furthermore, buyers choose base valuation of the energy $p_i$ such that their bid is higher than the grid's buying price $g_\text{b} = 0.05 $ £/kWh and not more than the grid's selling price $g_\text{s} = 0.17 $ £/kWh and the sellers choose their valuation $c_j$ less than the grid's selling price $g_\text{s}$  \cite{han2018incentivizing}. To account for the uncertainty, we use three PV generation scenarios.
\begin{figure}[t]
\centering
\includegraphics[width=0.85\linewidth]{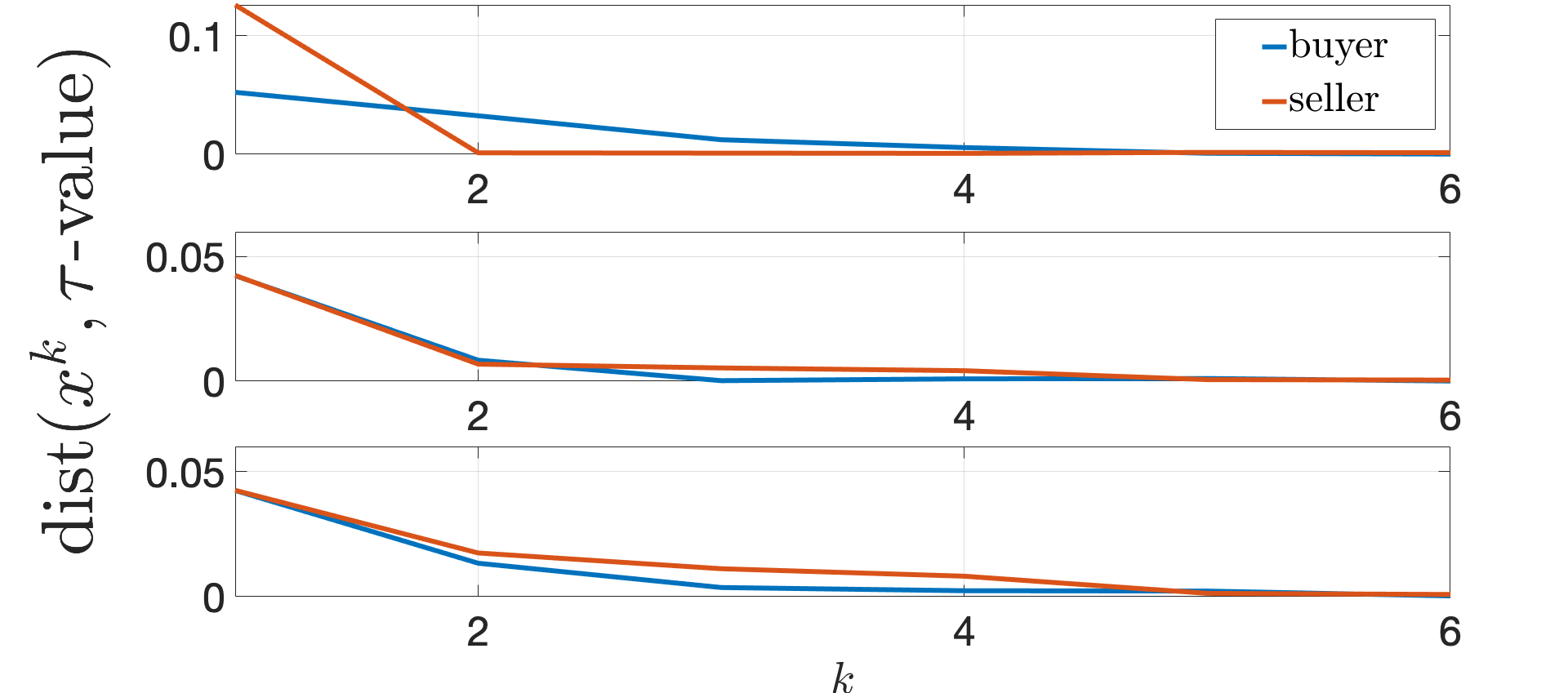}
\caption{Trajectories of $\mathrm{dist}(\boldsymbol{x}^k, \tau$-value$)$ via bilateral negotiation algorithm with operator $T_i := \mathrm{proj}_{\mathcal{X}_i}$ for optimally matched buyer-seller pairs.}
\label{fig: convergence}
\vspace{-3mm}
\end{figure}

\begin{figure}[t]
\centering
\includegraphics[width=0.8\linewidth]{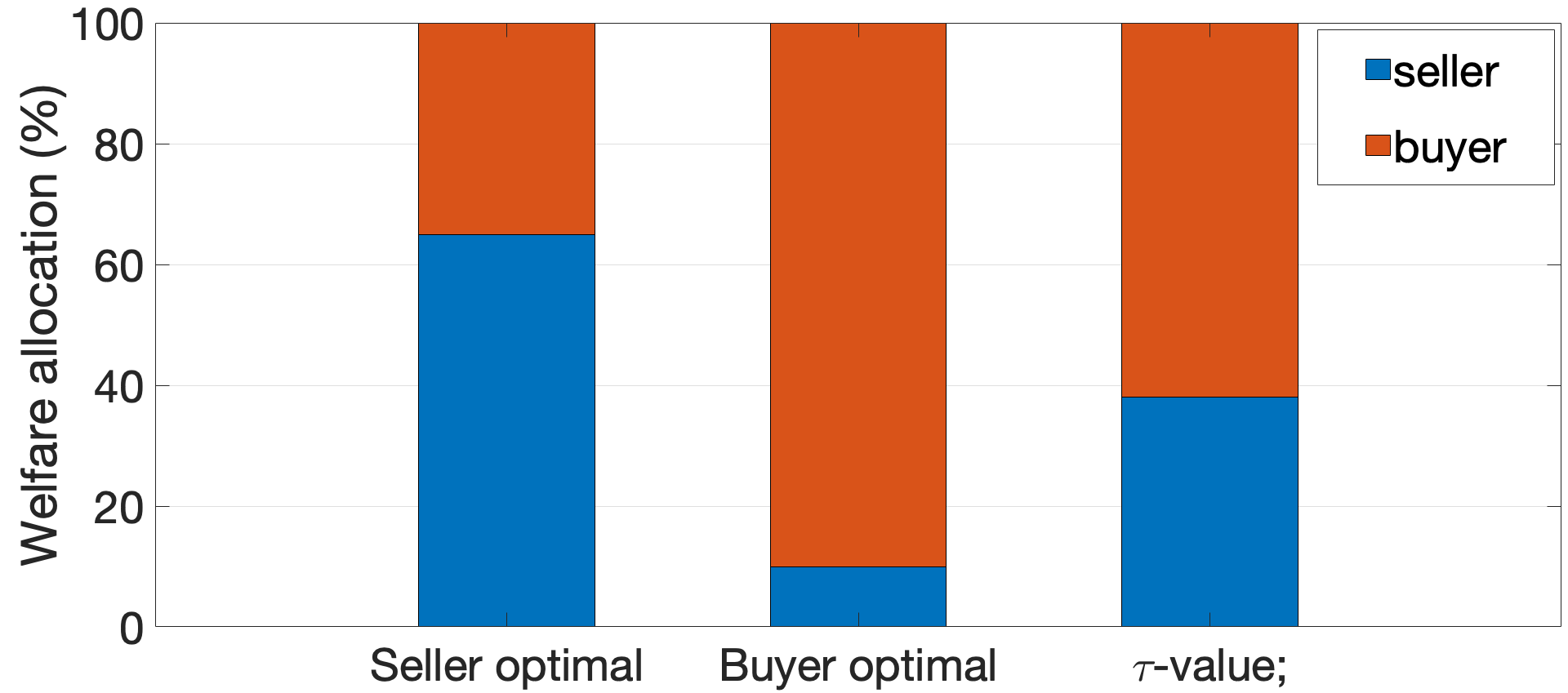}
\caption{Percentage distribution of total welfare between buyers and sellers via seller optimal, buyer optimal and $\tau$-value payoffs.}
\label{fig: welfare_assigned}
\vspace{-5mm}
\end{figure}
In our P2P market setup, first the market operator evaluates the optimal trading pairs of buyers and sellers using the formulation in (\ref{eq: assignment game}). Then, each matched pair internally negotiates for the contract prices via mechanism in (\ref{main_it}). In Figure \ref{fig: convergence}, we show the convergence of bilateral negotiation algorithm to the respective $\tau$-value payoffs. In Figure \ref{fig: welfare_assigned} we show the welfare allocation by respective points in the core. As the payoff is negotiated bilaterally the gain of buyer corresponds to the loss of seller and vice versa thus, we observe that, the $\tau$-value payoff provides more fair treatment to both sides. Next, we illustrate the economic benefit of trading inside the P2P market, compared to trading with the grid, in Figure \ref{fig: revenue}.
\begin{figure}[t]
\centering
\includegraphics[width=0.85\linewidth]{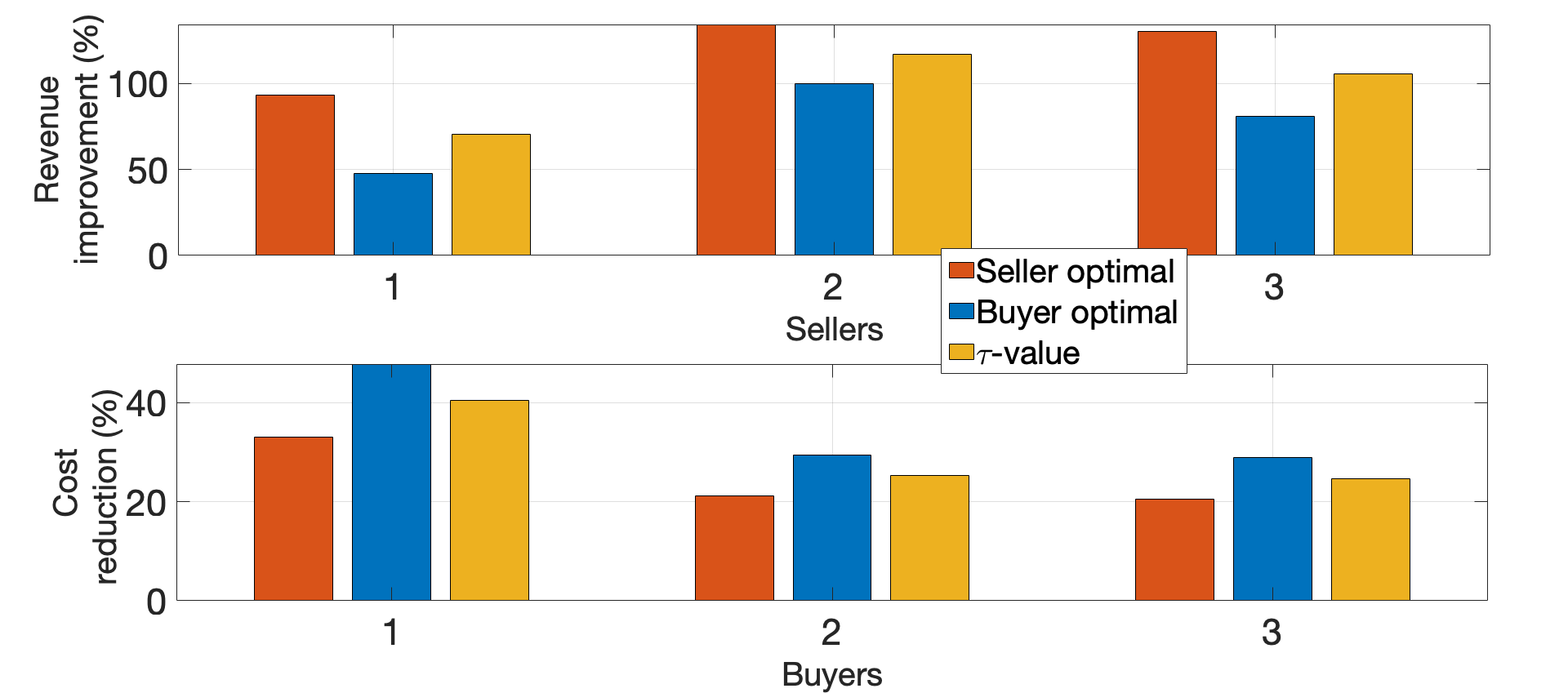}
\caption{Average revenue improvement (sellers) and average cost reduction (buyers) via seller optimal, buyer optimal and $\tau$-value payoffs compared to energy trading with the grid.}
\label{fig: revenue}
\vspace{-5mm}
\end{figure}
\section{Conclusion}
We have modelled P2P energy trading as an assignment game (coalitional game) and proposed a bilateral negotiation process as a clearing mechanism. The proposed P2P electricity market model encourages prosumers to participate by providing ease of accessibility, flexibility of choice and economic benefits, i.e., higher revenue (sellers) and lower energy costs (buyers) compared to trading with the grid. Furthermore, the bilateral negotiation mechanism enables participants to reach a trading contract ($\tau$-value) which fairly divides the resulting market welfare among buyers and sellers.\\
\vspace{-5mm}
\appendix
To prove the convergence of (\ref{main_it}), as stated in Theorem \ref{theorem: main}, we first provide useful property of a paracontraction operator.
%\begin{prop}\label{lem: proj, over proj} (Projection operator):
%Let $C \subset \mathbb{R}^n$ be a non-empty, closed and convex set. Then, with respect to the norm $\|\cdot\|_2$, $\mathrm{proj}_C$ is a paracontraction, [\cite{Elsner1992}, Example 2]. $\hfill \square$
%\end{prop}
\begin{lem}[\cite{Fullmer2018}, Thm. 2]\label{lemma: paracontractions}
Let $\mathcal{M} = \{M_1, \ldots, M_m\}$ be a set of paracontractions such that $\bigcap_{M \in \mathcal{M}} \mathrm{fix}(M) \neq \varnothing $. Let the communication graph be connected and consider the iteration $\boldsymbol{x}^{k+1} = \boldsymbol{M}(\boldsymbol{W}^k(\boldsymbol{x}^k)),$ where $\boldsymbol{M} (\boldsymbol{x}):= \mathrm{col}(M_1(\boldsymbol{x}_1), \ldots,  M_m(\boldsymbol{x}_m))$. Then, the state $\boldsymbol{x}^{k}$ converges to a state in the set $\mathcal{A} \cap \mathrm{fix}(\boldsymbol{M})$ as $k \to \infty$. $\hfill \square$
\end{lem}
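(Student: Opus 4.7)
The proof is a Fejér-monotone/Lyapunov argument anchored at a common fixed point. Since $\bigcap_{M \in \mathcal{M}} \mathrm{fix}(M) \neq \varnothing$ by hypothesis, pick $z^{*}$ in this intersection and lift it to the consensus vector $\boldsymbol{z}^{*} := \mathrm{col}(z^{*}, \ldots, z^{*}) \in \mathcal{A} \cap \mathrm{fix}(\boldsymbol{M})$. Row-stochasticity of $W^{k}$ immediately yields $\boldsymbol{W}^{k} \boldsymbol{z}^{*} = \boldsymbol{z}^{*}$. I would equip $\mathbb{R}^{mn}$ with the block-max product norm $\|\boldsymbol{x}\|_{*} := \max_{i} \|\boldsymbol{x}_{i}\|$, built from the common paracontraction norm $\|\cdot\|$ of the $M_{i}$'s. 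Under this norm both $\boldsymbol{M}$ and $\boldsymbol{W}^{k}$ are non-expansive about $\boldsymbol{z}^{*}$: the first by componentwise paracontraction, the second because, by row-stochasticity, each block of $\boldsymbol{W}^{k}(\boldsymbol{x}-\boldsymbol{z}^{*})$ is a convex combination of the blocks of $\boldsymbol{x}-\boldsymbol{z}^{*}$ and hence has norm at most $\|\boldsymbol{x}-\boldsymbol{z}^{*}\|_{*}$. I then set $V(\boldsymbol{x}) := \|\boldsymbol{x} - \boldsymbol{z}^{*}\|_{*}$ as the Lyapunov candidate.

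Splitting each step into $\boldsymbol{y}^{k} := \boldsymbol{W}^{k} \boldsymbol{x}^{k}$ followed by $\boldsymbol{x}^{k+1} := \boldsymbol{M}(\boldsymbol{y}^{k})$ gives $V(\boldsymbol{x}^{k+1}) \leq V(\boldsymbol{y}^{k}) \leq V(\boldsymbol{x}^{k})$, hence $V(\boldsymbol{x}^{k}) \downarrow V^{*}$ for some $V^{*} \geq 0$, and the iterates stay in a compact sublevel set. Extract a convergent subsequence $\boldsymbol{x}^{k_{j}} \to \hat{\boldsymbol{x}}$ with $V(\hat{\boldsymbol{x}}) = V^{*}$; by Assumption \ref{asm: fixed graph}, pass to a further subsequence on which $W^{k_{j}} \equiv \bar{W}$ is a single row-stochastic matrix whose entries are bounded below by $\gamma > 0$ (Assumption \ref{asm: graph}) and whose graph is connected. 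Continuity of $\bar{W}$ and of the paracontractions then yields $\boldsymbol{x}^{k_{j}+1} \to \boldsymbol{M}(\bar{W} \hat{\boldsymbol{x}})$ and $V(\boldsymbol{M}(\bar{W} \hat{\boldsymbol{x}})) = V(\bar{W} \hat{\boldsymbol{x}}) = V(\hat{\boldsymbol{x}}) = V^{*}$.

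Next, I would characterize the limit point. A diagonal extraction along several successive iterates under $\bar{W}$ (all of which have $V$-value equal to $V^{*}$) combined with the strict paracontraction of each $M_{i}$ off its fixed-point set forces every block of $\bar{W} \hat{\boldsymbol{x}}$ to lie in $\mathrm{fix}(M_{i})$, promoting the single-block fixed-point information coming from the block-max norm to the vectorial identity $\bar{W} \hat{\boldsymbol{x}} \in \mathrm{fix}(\boldsymbol{M})$. In parallel, the equality $V(\bar{W} \hat{\boldsymbol{x}}) = V(\hat{\boldsymbol{x}})$ combined with uniform positivity of $\bar{W}$'s entries and connectivity of the communication graph forces $\hat{\boldsymbol{x}} - \boldsymbol{z}^{*}$ to be constant across blocks: any strict convex averaging of distinct blocks would shrink the maximum-block norm, and connectivity propagates the equality from the blocks that achieve the maximum to every other block. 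Hence $\hat{\boldsymbol{x}} \in \mathcal{A}$, so $\bar{W} \hat{\boldsymbol{x}} = \hat{\boldsymbol{x}} \in \mathrm{fix}(\boldsymbol{M})$, i.e., $\hat{\boldsymbol{x}} \in \mathcal{A} \cap \mathrm{fix}(\boldsymbol{M})$.

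Finally, I would upgrade subsequential to full convergence by reusing the Lyapunov argument with $\hat{\boldsymbol{x}}$ itself as the anchor: $\|\boldsymbol{x}^{k} - \hat{\boldsymbol{x}}\|_{*}$ is monotone non-increasing by the same two-step reasoning, and it admits a subsequence tending to $0$, so the full sequence converges to $\hat{\boldsymbol{x}} \in \mathcal{A} \cap \mathrm{fix}(\boldsymbol{M})$. The main obstacle is the characterization-of-limit step, where one must simultaneously squeeze information out of the paracontraction side (forcing \emph{every} block of $\bar{W} \hat{\boldsymbol{x}}$ to sit at a fixed point of the corresponding $M_{i}$, not only the one that reads off the block-max) and out of the stochastic-matrix side (extracting consensus from the equality case of non-expansiveness, via uniformly positive weights and graph connectivity); the remaining pieces are standard Fejér-monotone bookkeeping.
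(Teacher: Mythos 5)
First, a point of comparison: the paper does not prove this lemma at all. It is imported verbatim from \cite{Fullmer2018} (Thm.~2), and the proof environment that follows it in the appendix is really the three-line proof of Theorem~\ref{theorem: main}, which merely \emph{invokes} the lemma. So your proposal has to stand on its own as a reconstruction of the cited result. At the level of architecture it does match the reference's strategy: Fej\'er monotonicity of a block-max distance to a lifted common fixed point, compactness, extraction of a subsequence along which the (finitely many) weight matrices repeat, characterization of the limit, and re-anchoring at the limit to upgrade subsequential to full convergence. The bookkeeping is right: $\boldsymbol{W}^k$ is non-expansive about $\boldsymbol{z}^*$ by row-stochasticity, $\boldsymbol{M}$ by the paracontraction property, and the final anchor swap is legitimate precisely because you first place $\hat{\boldsymbol{x}}$ in $\mathcal{A}\cap\mathrm{fix}(\boldsymbol{M})$.

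The gap is in the limit-characterization step, which you correctly flag as the main obstacle but then resolve by assertion, and one of the two assertions is false as stated. The consensus claim ``any strict convex averaging of distinct blocks would shrink the maximum-block norm'' fails for a general norm: take $\|\cdot\|=\|\cdot\|_\infty$ on $\mathbb{R}^2$ and blocks $\hat{\boldsymbol{x}}_1-z^*=(1,1)$, $\hat{\boldsymbol{x}}_2-z^*=(1,-1)$; their average $(1,0)$ has the same norm $1$. The paracontraction definition in this paper fixes an \emph{arbitrary} norm, so equality in the averaging step does not force the blocks to coincide unless that norm is strictly convex --- an assumption you would need to add or circumvent. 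Likewise, the equality $V(\boldsymbol{M}(\bar{W}\hat{\boldsymbol{x}}))=V(\bar{W}\hat{\boldsymbol{x}})$ only constrains the block(s) attaining the maximum of the block-max norm; promoting this to ``every block of $\bar{W}\hat{\boldsymbol{x}}$ lies in $\mathrm{fix}(M_i)$'' is the actual content of the cited theorem, and ``a diagonal extraction along several successive iterates'' names that argument without supplying it (one must track windows of consecutive iterates, extract a subsequence on which the whole window of matrices is constant, and interleave the strict-contraction and mixing mechanisms). A smaller point: you invoke Assumptions~\ref{asm: graph} and~\ref{asm: fixed graph}, which are hypotheses of Theorem~\ref{theorem: main} rather than of the lemma as stated; some such uniform-positivity and finiteness assumption is indeed indispensable for your extraction step, so it should be stated as part of the lemma's hypotheses rather than borrowed silently.
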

\begin{proof}
By (\ref{eq: fav payoff}), $\mathcal{X}_{(i,j)} = \frac{(\overline{x}_i', \underline{x}_j'') + (\underline{x}_i', \overline{x}_j'')}{2} $. Let Assumptions \ref{asm: graph} and \ref{asm: fixed points of T} hold then, by Lemma \ref{lemma: paracontractions} the iteration in (\ref{main_it}) converges to $\mathcal{A} \cap \mathcal{X}^2_{(i,j)}$. Next, by Definition in (\ref{eq: tau value}) the collective payoff $\boldsymbol{x}^*$ is the $\tau$-value thus, by Remark \ref{rem: tau value in core}, $\boldsymbol{x}^* \in \mathcal{C}_M$.
\end{proof}
%\vspace{-1mm}

\bibliographystyle{IEEEtran}
\bibliography{IEEEabrv,Bibliography}

\end{document}